\theoremstyle{plain}
\newtheorem{theorem}{Theorem}[section]
\newtheorem{lemma}[theorem]{Lemma}
\theoremstyle{definition}
\newcommand{\qedsymb}{\hfill{\rule{2mm}{2mm}}}
\renewenvironment{proof}{\begin{trivlist} \item[\hspace{\labelsep}{\bf \noindent Proof.\/}] }{\qedsymb\end{trivlist}}%
\newenvironment{proofof}[1]{\begin{trivlist} \item[\hspace{\labelsep}{\bf \noindent Proof of #1.\/}] }{\qedsymb\end{trivlist}}%
\newcommand{\bs}[1]{\boldsymbol{#1}}
\newcommand{\poly}{\mathrm{poly}}
\newcommand{\opt}{\mathrm{OPT}}
\newcommand{\alg}{\mathrm{ALG}}
\begin{document}

\begin{titlepage}
\title{A Polynomial-Time Approximation Scheme for \\
The Airplane Refueling Problem%
}
\author{
Iftah Gamzu%
\thanks{Yahoo Labs, Haifa, Israel. Email: {\tt iftah.gamzu@yahoo.com}.}%
\and%
Danny Segev%
\thanks{Department of Statistics, University of Haifa, Israel. %
Email: {\tt segevd@stat.haifa.ac.il}.}%
}%
\date{}
\maketitle

\begin{abstract}
We study the airplane refueling problem which was introduced by the physicists Gamow and Stern in their classical book {\em Puzzle-Math}~(1958). Sticking to the original story behind this problem, suppose we have to deliver a bomb in some distant point of the globe, the distance being much greater than the range of any individual airplane at our disposal. Therefore, the only feasible option to carry out this mission is to better utilize our fleet via mid-air refueling. Starting with several airplanes that can refuel one another, and gradually drop out of the flight until the single plane carrying the bomb reaches the target, how would you plan the refueling policy?

The main contribution of Gamow and Stern was to provide a complete characterization of the optimal refueling policy for the special case of identical airplanes. In spite of their elegant and easy-to-analyze solution, the computational complexity of the general airplane refueling problem, with arbitrary tank volumes and consumption rates, has remained widely open ever since, as recently pointed out by Woeginger~(Open Problems in Scheduling, Dagstuhl~2010, page~24). To our knowledge, other than a logarithmic approximation, which can be attributed to folklore, it is not entirely obvious even if constant-factor performance guarantees are within reach.

In this paper, we propose a polynomial-time approximation scheme for the airplane refueling problem in its utmost generality. Our approach builds on a novel combination of ideas related to parametric pruning, efficient guessing tricks, reductions to well-structured instances of generalized assignment, and additional insight into how LP-rounding algorithms in this context actually work. We complement this result by presenting a fast and easy-to-implement algorithm that approximates the optimal refueling policy to within a constant factor.
\end{abstract}

\thispagestyle{empty}
\end{titlepage}

% INTRO %%%%%%%%%%%%%%%%%%%%%%%%%%%%%%%%%%%%%%%%%%%%%%%%%%%%%%%%%%%%%%
\section{Introduction} \label{sec:intro}

We study the \textit{airplane refueling problem} which was introduced in~1958 by the physicists George Gamow and Marvin Stern in the Aeronautica chapter of their classical book \textit{Puzzle-Math}~\cite{GamowS58}. Sticking to the original story behind this problem, suppose we have a fleet of airplanes and wish to deliver a bomb in some distant point of the globe. However, the distance to the target is much greater than the range of any individual plane, so the only feasible option to carry out this mission is to better utilize our fleet via mid-air refueling. Starting with several planes that can refuel one another, and gradually drop out of the flight until the single plane carrying the bomb reaches the target, how would you plan the refueling policy? For simplicity, it is assumed that fuel consumption is independent of the airplane load, and that refueling actions are instantaneous.

\smallskip \noindent {\bf Problem definition.} Formally, an instance of the airplane refueling problem consists of a fleet of $n$ airplanes, to which we refer as $A_1, \ldots, A_n$. Each plane $A_j$ is characterized by two basic attributes: (1) A tank volume of $v_j$, measured in, say, gallons; and (2) A fuel consumption rate of $c_j$, measured in gallons per mile. So, without any refueling actions, this plane can travel $v_j / c_j$ miles by itself. The objective is to decide on the order by which planes drop out, and on how the remaining fuel is distributed at that time, trying to maximize the overall distance traveled by the last plane remaining.

It is instructive to imagine the fleet of airplanes as a single entity whose fuel consumption rate at any time is given by the sum of individual fuel consumption rates of all active planes. Namely, if $\sigma(1), \ldots, \sigma(n)$ stands for the drop out permutation, then the tank volume $v_{ \sigma(j) }$ of plane $A_{ \sigma(j) }$ could be thought of as being consumed by planes $A_{ \sigma(j) }, \ldots, A_{ \sigma(n) }$ at their combined consumption rate, $\sum_{k = j}^n c_{ \sigma(k) }$. One can easily observe that it is best to drop $A_{ \sigma(j) }$ as soon as the total amount of fuel retained by the remaining fleet equals the overall capacity of the all planes excluding $A_{ \sigma(j) }$. This reduces the problem to that of deciding on the order of plane dropouts. As consequence, the objective can be rephrased as computing a permutation $\sigma$ that maximizes the distance traveled by the last airplane, $\sum_{j = 1}^n v_{\sigma(j)}/ \sum_{k = j}^n c_{ \sigma(k) }$.

To obtain better intuition, we focus on a slightly different objective function, which is completely equivalent to the one above. Our alternative definition considers the original permutation in reverse order. That is, we are interested in computing a drop out permutation $\pi$ that maximizes the distance value
$$
\sum_{j = 1}^n \frac{v_{\pi(j)}}{\sum_{k = 1}^j c_{ \pi(k) }}  \ . 
$$
Notice that one can recover the original formulation by replacing each $\pi(j)$ with $\sigma(n-j+1)$. We also assume without loss of generality that $c_{\min} = \min_j c_j = 1$; otherwise, we can normalize all $v_j$'s and $c_j$'s by $c_{\min}$, leaving the objective value unchanged. Finally, we let $C = \sum_{j=1}^n c_{j}$, and refer to the term $v_{\pi(j)} / \sum_{k = 1}^j c_{ \pi(k) }$ as the \textit{distance contribution} of airplane $A_{\pi(j) }$.

\smallskip \noindent {\bf Problem status.} As previously mentioned, the airplane refueling problem was introduced by Gamow and Stern in their book \textit{Puzzle-Math}~\cite{GamowS58}, which ignited numerous research thrusts in areas such as fairness considerations in resource allocation~\cite{Thomson11, Procaccia13}, probabilistic reasoning~\cite{Knuth69, Wuffle82, DeS96}, epistemic logic~\cite{BaltagDM08, DitmarschRV08}, and more. The main contribution of Gamow and Stern was to  provide a complete characterization of the optimal refueling policy for the special case of \textit{identical} airplanes. In this case, all tank volumes are equal to $V$ and all consumption rates are normalized to $1$. As the flight proceeds, when the amount of fuel left in any one of the $n$ planes reaches $(1 - 1/n) \cdot V$, there is just enough fuel to completely refuel $n-1$ planes. At this moment, the entire fuel tank of one of the airplanes is used to refuel the remaining planes, this plane drops off, while the others proceed with a full tank. The next drop out occurs when the fuel tank of each plane reaches $(1 - 1 / (n-1)) \cdot V$, and so on. Thus, the optimal distance to be traveled is ${\cal H}_n \cdot V$, where ${\cal H}_n = \sum_{k = 1}^n 1 / k$ is the $n$-th harmonic number.

In spite of this elegant and easy-to-analyze solution, the computational complexity of the airplane refueling problem in its utmost generality, with arbitrary tank volumes and consumption rates, has remained widely open ever since, as recently pointed out by Woeginger~\cite{Dagstuhl2010}. On the one hand, it appears as if there is very little structure to exploit in an attempt to establish any hardness result. On the other hand, it is not entirely obvious even if constant-factor performance guarantees are within reach as the best known algorithm for this problem, which can be attributed to folklore, achieves an $\Omega(1 / \log n)$-approximation ratio.\footnote{The logarithmic approximation algorithm and its analysis are described in Appendix~\ref{app:log_approx}.}

\subsection{Our results} \label{subsec:results}

{\bf Polynomial-time approximation scheme.} The main contribution of this paper is to develop a polynomial-time approximation scheme (PTAS) for the airplane refueling problem in its utmost generality. That is, we show that the optimal drop out permutation can be efficiently approximated to within any degree of accuracy. As explained in Section~\ref{sec:techniques}, this is achieved by a novel combination of ideas related to parametric pruning, guessing tricks, reductions to well-structured instances of generalized assignment, and additional insight into how some LP-rounding algorithms in this context actually work. The specifics of our approach are provided in Section~\ref{sec:ptas}, where we propose a pseudo-PTAS, whose running time is improved to a true PTAS in Appendix~\ref{app:ptas}.

\begin{theorem} \label{thm:ptas}
The airplane refueling problem admits a polynomial-time approximation scheme.
\end{theorem}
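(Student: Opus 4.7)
The plan is to combine parametric pruning, guessing of the critical part of the optimal solution, and LP rounding applied to a generalized-assignment-type residual subproblem. I would start by bringing the optimum to a known scale: using the folklore $\Omega(1/\log n)$-approximation mentioned above, together with enumeration over a geometric grid of $O(\epsilon^{-1}\log n)$ candidate values, obtain a guess $B\in[\opt,(1+\epsilon)\opt]$. This standard parametric-pruning step costs only a $(1+\epsilon)$ factor.

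Next I would discretize the prefix-sum ``timeline'' by fixing thresholds $T_i=(1+\epsilon)^i$ for $i=0,\ldots,L$ with $L=O(\epsilon^{-1}\log(nC))$, and interpret any solution as an assignment $\ell\colon\{A_1,\ldots,A_n\}\to\{0,\ldots,L\}$ telling, for each plane, the bucket that contains the prefix sum at the moment that plane is placed. The contribution of $A_j$ under $\ell$ is then within a factor $(1+\epsilon)$ of $v_j/T_{\ell(j)}$, and the only constraints are the nested capacities $\sum_{j\colon\ell(j)\le i}c_j\le T_i$ for every $i$. This yields a clean combinatorial formulation that resembles generalized assignment, except that the knapsack constraints are nested rather than independent.

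The core of the algorithm would then attack this assignment problem by splitting the planes into ``heavy'' ones, whose individual contribution in the optimal solution could be $\Omega(\epsilon B/L)$, and ``light'' ones. A counting argument based on the total value $B$ shows that there are only $\poly(L/\epsilon)=\poly(\log n,1/\epsilon)$ heavy planes, so their identities and host buckets can be guessed exhaustively in quasi-polynomial time, which is what gives a pseudo-PTAS first. For the light planes, I would solve the natural LP relaxation with assignment variables $x_{j,i}$ together with the nested-capacity inequalities above. The key rounding lemma to prove is that any fractional extreme point of this LP has at most $O(L)$ fractional variables --- since there are only $L+1$ nontrivial capacity constraints in addition to the assignment equalities --- whence a Shmoys--Tardos-type rounding that reassigns or drops at most $O(L)$ light planes produces an integral solution losing only $O(L)\cdot O(\epsilon B/L)=O(\epsilon)\cdot\opt$.

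The hard part will be exactly this rounding step: because the capacity constraints are nested, a single plane rounded up at bucket $i$ consumes capacity in every subsequent bucket, so a naive GAP-style rounding can cascade and violate feasibility at higher levels. I expect to resolve this by exploiting the laminar structure of the constraint matrix, which forces extreme points to be sparse, and by designing the rounding so that any displaced mass moves strictly toward later buckets, where contributions are only smaller, ensuring that capacity violations never propagate backwards. Once this pseudo-PTAS is in place, refining the guessing to shrink its overhead from quasi-polynomial to polynomial yields the true PTAS, as indicated by the reference to Appendix~\ref{app:ptas}.
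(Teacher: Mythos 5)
Your overall template (geometric timeline discretization, reduction to an assignment problem, LP relaxation in the style of Shmoys--Tardos, plus a preprocessing step to eliminate pseudo-polynomial dependence) matches the paper in spirit, but two steps in your plan do not actually close. First, your guessing step is quasi-polynomial, not pseudo-polynomial, and the appendix trick you cite does not fix this. You propose to guess the \emph{identities} and host buckets of the heavy planes, of which there may be $\Theta(L/\epsilon)=\Theta(\epsilon^{-2}\log(nC))$; enumerating subsets of that size out of $n$ planes costs $n^{\Theta(\epsilon^{-2}\log(nC))}$, which remains $n^{\Theta(\log n)}$ even after one forces $C=\poly(n)$. The paper's Appendix~\ref{app:ptas} addresses only the $C$-dependence (by a classify-and-delete partition of the consumption rates into blocks with polynomial spread), so it cannot rescue an $n^{\Omega(\log n)}$ guessing phase. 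The paper avoids this issue entirely: it never guesses individual heavy items but instead enumerates, per size class, how many items land in a constant number of ``small'' and ``medium'' knapsacks, so the per-class option count is a function of $\epsilon$ alone and the total guess count is $C^{\poly(1/\epsilon)}$, which becomes $n^{\poly(1/\epsilon)}$ after the block decomposition.

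Second, your rounding loss bound $O(L)\cdot O(\epsilon B/L)=O(\epsilon B)$ is not justified as stated. ``Light'' is defined by a plane's contribution \emph{in the optimal solution}, but the LP relaxation may fractionally place a light plane in a much earlier bucket where $v_j/T_{\ell(j)}$ is large; the $O(L)$ items you reassign or drop could then carry far more than $O(\epsilon B/L)$ each. The laminar structure of the nested constraints does control the \emph{number} of fractional items, but not their LP-contribution, and your idea of pushing displaced mass to later buckets needs a quantitative bound on how far they are pushed. The paper's Lemma~\ref{lem:items_vs_airplanes} supplies precisely such a bound, but only because the guessing step ensures that every over-packed knapsack is ``big'' for the infeasibility item's size class, so each such item is at most an $\epsilon^2/(1+\epsilon)$-fraction of the capacity preceding it; the geometric sum then shows the total slippage is an $\epsilon$-fraction. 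Without the small/medium/big classification, nothing prevents a single infeasibility item from being comparable to its knapsack, and the cascade you worry about can indeed occur. So while your laminar-extreme-point observation is a reasonable replacement for the paper's bipartite-matching view, the missing ingredient is a structural guarantee on the size of the rounded items relative to prior capacity, which the paper obtains from the guessing step you do not have.
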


\noindent {\bf Fast $\bs{\Omega(1)}$-approximation.} Even though our main result significantly progresses the current state of knowledge regarding the airplane refueling problem, these methods may result in impractical running times due to heavy guessing work. Therefore, we present a fast and easy-to-implement algorithm that approximates the airplane refueling problem to within a constant factor. In addition to providing a truly-efficient solution method, the marginal contribution of this algorithm is in serving as a brief prologue to the more technical PTAS that follows, allowing us to incrementally present a decent part of our terminology, ideas, and algorithmic tools, without delving into highly complicated analysis. In particular, we 
draw some crucial connections between the airplane refueling problem and the maximum generalized assignment problem. These newly discovered observations underlie our algorithms and analysis. Further details are given in Section~\ref{sec:constant}.

\begin{theorem} \label{thm:constant_approx}
The airplane refueling problem can be approximated within factor~$1/12$ in time $O(n \log C)$.
\end{theorem}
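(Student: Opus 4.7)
The plan is to reduce the airplane refueling problem to a well-structured instance of the maximum generalized assignment problem (GAP) by bucketing positions according to a geometric scale of the cumulative consumption $D_j = \sum_{k = 1}^j c_{\pi(k)}$, and then invoking a very simple greedy for that GAP. The outer $O(\log C)$ factor in the running time will come from a geometric guess of the value of the optimum, while the inner $O(n)$ factor will come from a single-sweep greedy on the bucketed instance; the best solution produced across all guesses is returned.

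To set up the guessing, observe that $c_{\min} = 1$ together with the obvious bounds $\max_j v_j/c_j \le \opt \le \sum_j v_j$ place $\opt$ in a multiplicative interval of width polynomial in $C$, so at the cost of a further factor of $2$ in the approximation I may enumerate $O(\log C)$ dyadic thresholds $T$ and focus on the one with $T \le \opt \le 2T$. Fixing such a $T$, I partition positions into levels $L_0, L_1, \ldots$, where $L_\ell$ consists of the positions $j$ at which $D_j \in [2^\ell, 2^{\ell+1})$. If plane $A_k$ lies at a position of level $\ell$, its distance contribution falls in $[v_k/2^{\ell+1}, v_k/2^\ell]$, and so, up to a factor of $2$, it suffices to decide, for each plane, the level into which it is placed.

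This level-assignment is naturally modeled as a GAP instance whose bins are the levels, where the capacity of bin $\ell$ is $2^\ell$ (the slack between the two dyadic thresholds bounding $D_j$ at this level), each plane $A_k$ is an item of size $c_k$, and the value of placing $A_k$ in bin $\ell$ is the underestimate $v_k/2^{\ell+1}$. Two directions of comparison then need to be set up: first, the permutation achieving $\opt$ induces a feasible GAP assignment whose GAP-value is at least $\opt/2$; second, any feasible GAP assignment can be realized as a permutation — by concatenating the bins in ascending order of $\ell$ and ordering within each bin arbitrarily — in which every plane's actual distance contribution dominates its GAP-value. With these two bridges in place, a simple greedy for GAP (sort planes once by $v_k/c_k$, then for each bin take the best items that still fit) solves the bucketed instance in $O(n)$ time per guess; combining the factor of $2$ from the guess of $\opt$, the factor of $2$ from bucketing, a factor of $3$ from the greedy, and a minor loss from discarding items whose per-level value is below $\Theta(T/n)$, I arrive at the claimed ratio of $1/12$ in total time $O(n \log C)$.

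The main obstacle is the realization direction: ensuring that an arbitrary GAP solution can be stitched into a valid permutation whose $D_j$ at every position falls within the geometric range dictated by its assigned level. Because $D_j$ is a prefix sum, the consumption allocated to lower-indexed bins feeds directly into the denominators of later positions, so an overloaded bin could push subsequent denominators above their target $2^{\ell+1}$. The bin capacity $2^\ell$, doubling from one level to the next, is precisely engineered to absorb this slack, but articulating this argument carefully — and pinning down the threshold below which items are discarded as unprofitable — is the technical heart of the proof and the place where the connection between GAP and airplane refueling must be set up with care.
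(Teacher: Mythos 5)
Your high-level plan --- bucket positions by the dyadic scale of the cumulative consumption $D_j$, reduce to a generalized assignment instance over those buckets, and invoke an $O(n)$-per-knapsack GAP routine --- is essentially the paper's plan. But there are concrete gaps, and the most important one sits in the direction you dismiss as unproblematic.

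The forward bridge (``the permutation achieving $\opt$ induces a feasible GAP assignment whose GAP-value is at least $\opt/2$'') is false in your setup. You give bin $\ell$ capacity $2^\ell$, the width of the dyadic window $[2^\ell, 2^{\ell+1})$. But the planes you would route to bin $\ell$ are exactly those whose interval $(D_{j-1}, D_j]$ \emph{ends} in that window, and such intervals may start arbitrarily far back (a single crossing interval can have length close to $2^{\ell+1}$). Since they are pairwise disjoint and all end before $2^{\ell+1}$, their total length is bounded by $2^{\ell+1}$, not by $2^\ell$. The paper resolves this by explicitly \emph{padding} --- doubling each bucket before casting it as a knapsack --- which is what makes the optimal permutation's induced assignment feasible in the first place, and which is where one of the two factors of $2$ in the $1/4$ reduction loss comes from. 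Your proposal is missing that doubling, so the feasibility claim has no proof and the constant-accounting does not close. By contrast, the direction you flag as the ``technical heart'' (stitching a GAP solution into a permutation) is actually the easy one: because $\sum_{k < \ell} (\text{cap of bin } k) < \text{cap of bin } \ell$ in a power-of-$2$ scale, any item in bin $\ell$ finishes before $2^{\ell+1}$ and its true distance contribution already dominates the GAP value $v_k/2^{\ell+1}$; no new argument is needed.

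Two smaller issues. First, your $O(\log C)$ factor comes from guessing $\opt$, costing you an extra factor of $2$; the paper never guesses $\opt$ --- the $O(\log C)$ is simply the number of knapsacks, and the Cohen--Katzir--Raz GAP algorithm is a $1/3$-approximation running in $O(NM) = O(n \log C)$ time that needs no estimate of $\opt$. Dropping the guess both simplifies the argument and frees up the factor you currently burn on it. Second, your ``simple greedy'' for GAP with ratio $3$ and the ``minor loss from discarding small items'' are asserted without analysis; with the extra factor of $2$ for padding and the extra factor of $2$ for the $\opt$-guess, your accounting would exceed $12$, so those unproved steps cannot be treated as a slack you can absorb. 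The clean route is: pad buckets by a factor of $2$ to make the reduction lose only $4$, then cite a $1/3$-approximation for GAP with running time linear in (items $\times$ knapsacks), giving $1/12$ directly.
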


Note that due to space limitations, some proofs and details are omitted from the technical part of this paper and appear in the appendix.

\subsection{Related work} \label{subsec:related_work}
The airplane refueling problem is also known under the more general name of \textit{vehicle refueling problem}. Indeed, the application domain of this problem extends well beyond the concrete scenario suggested by Gamow and Stern~\cite{GamowS58}. For instance, airplane refueling can be interpreted as a scheduling problem. From this perspective, $n$ jobs should be scheduled on a single machine, where each job $j$ has a value of $v_j$ and a processing time of $c_j$. Let $C_j$ stand for the completion time of job $j$, the objective is to compute a non-preemptive schedule that maximizes the weighted sum of inverse completion times, $\sum_{j = 1}^n v_j / C_j$. This interesting setting captures latency-sensitive scenarios in which the value gained from jobs diminishes over time. Such scenarios are motivated by real-life issues arising in the context of network devices transmitting packets that bear
time-dependent data like audio or video~\cite{FiatMN08,FeldmanN10}. We like to emphasize that our algorithmic ideas can in fact be applied to a scheduling setting whose objective depends on a more general family of inverse completion time functions. We defer the formalities to the final version of this paper.

There has been a tremendous amount of work on machine scheduling problems where the objective is to minimize some function of the completion times. We refer the reader to a number of excellent surveys~\cite{ChekuriK04,Wiebke14} and to the references therein for a more comprehensive literature review. That said, a particularly relevant work is that of Megow and Verschae~\cite{MegowV13}, who developed a PTAS for the generalized objective of minimizing $\sum_{j = 1}^n v_j f(C_j)$, where $f$ is an arbitrary non-decreasing cost function. Later on, H{\"o}hn~\cite{Wiebke14} observed that any instance of the airplane refueling problem can be rephrased in terms of this minimization problem by replacing the term $1 / C_j$ in the objective function with $1 - 1 /C_j$. Unfortunately, this transformation does not preserve approximation guarantees, and we are not aware of any way to utilize the above-mentioned results for approximating the airplane refueling problem.

As already hinted, an important ingredient of our approach consists of constructing near-optimal solutions for well-structured instances of generalized assignment, so a brief overview of existing work is in place. In this setting, we are given a set of knapsacks, each of which has a capacity constraint, and a set of items that have a possibly different size and value when assigned to each knapsack. The goal is to pack a maximum-value subset of items in the knapsacks while respecting their capacity constraints. Shmoys and Tardos~\cite{ShmoysT93} seem to have been the first to (implicitly) study this problem. They presented an LP-based algorithm for the \textit{minimization} variant, which was shown by Chekuri and Khanna~\cite{ChekuriK05} to provide a $1/2$-approximation for the maximization variant, with small modifications. Chekuri and Khanna also classified several APX-hard special cases of generalized assignment. Fleischer, Goemans, Mirrokni and Sviridenko~\cite{FleischerGMS11} considered the separable assignment problem, which extends generalized assignment. In particular, they developed an LP-based $(1-1/e)$-approximation for the latter problem. Feige and Vondr{\'a}k~\cite{FeigeV06} proved that the $1 - 1/e$ factor is suboptimal by proposing an LP-based algorithm that attains an approximation factor of $1 - 1/e + \epsilon$, for some absolute constant $\epsilon > 0$. Additional papers in this context, with closely-related variants, are \cite{CohenKR06,NutovBY06}.

% OVERVIEW %%%%%%%%%%%%%%%%%%%%%%%%%%%%%%%%%%%%%%%%%%%%%%%
\section{Technical Overview} \label{sec:techniques}

{\bf Attaining constant performance guarantees.} We identify some crucial connections between the airplane refueling and generalized assignment problems, which culminate in an elegant way to obtain an $\Omega(1)$-approximation for the former problem. The high-level idea is to consider the sequence of cumulative sums of consumption rates induced by the optimal permutation $\pi^*$:
\[ c_{ \pi^*(1) }, \; (c_{ \pi^*(1) } + c_{ \pi^*(2)}), \; (c_{ \pi^*(1) } + c_{ \pi^*(2) } + c_{ \pi^*(3) }), \; \ldots \; , \; (c_{ \pi^*(1) } + \cdots + c_{ \pi^*(n) }) \ . \]%
This sequence can be viewed as an increasing sequence of points on a timeline. We partition this timeline geometrically by powers of $2$ into a collection of disjoint segments, $I_1, I_2, \ldots, I_{\lceil \log C \rceil}$. With respect to this partition, we say that airplane $A_j$ is \textit{fully packed} in segment $I_i$ when $\pi^*( \ell ) = j$ and the points $c_{ \pi^*(1) } + \cdots + c_{ \pi^*(\ell-1) }$ and $c_{ \pi^*(1) } + \cdots + c_{ \pi^*(\ell) }$ are both located in $I_i$. Given this point of view, there are two crucial observations:
\begin{enumerate}
\item Suppose that airplane $A_j$ is fully packed in segment $I_i = [\alpha, 2 \alpha)$. Then, the distance contribution of $A_j$ to the optimal objective value is bounded between $v_j / (2 \alpha)$ and $v_j / \alpha$. Hence, if we are willing to lose a factor of $2$, the exact position of plane $A_j$ in the permutation becomes a non-issue as long as we pack it in $I_i$. This observation motivates a reduction of our problem to an instance of the maximum generalized assignment problem, for which constant approximation guarantees are known.

\item Unfortunately, there could be quite a few airplanes that are not fully packed in any segment. To bypass this obstacle, we argue that there is a way of \textit{padding} the segments so that each and every airplane is fully packed in some segment, making the above-mentioned reduction applicable. In particular, we explain why a small amount of padding suffices, and why the additional loss is only a small constant.
\end{enumerate}

\noindent {\bf Improvements to obtain a PTAS.} Our technical contribution here is two-fold: First, we establish that all components of the above algorithm entailing some constant loss in optimality can be ``fixed'' to efficiently obtain a near-optimal drop out permutation. Specifically, we show that each of these components can be replaced by an improved mechanism that, despite being significantly more involved, leads to an $\epsilon$-loss rather than to an $O(1)$-loss. Second, we demonstrate how to combine them without incurring extra cost. As it turns out, there are many pieces that interact, and gluing them together requires novel and non-trivial combination of ideas. 
In what follows, we briefly highlight some of these components and how to fix them:
\begin{enumerate}
\item {\bf Better geometric jumps.} It seems natural to improve our initial approach by partitioning the timeline into segments geometrically by powers of $1 + \epsilon$ instead of $2$. However, after inspecting the corresponding analysis, one realizes that the jump size affects the amount of padding needed to ensure that all airplanes can be packed into segments (i.e., smaller jumps mean more padding). For this reason, the first technical idea is to sharpen the initial padding arguments, so that they are not significantly affected by jump sizes, allowing us to employ jumps by powers of $1 + \epsilon$.

\item {\bf Refined timeline partition.} To avoid over-padding, the second technical idea is to refine the geometric partition by forming additional segments, meant to capture airplanes that cross two or more original segments. This, in turn, introduces a couple of unexpected difficulties: (a)~Heavy guess work is needed in order to define the approximate length and location of newly-added segments; and (b)~As new segments are meant to capture a single crossing airplane, the underlying packing problem becomes more involved. In this setting, some segments can be used to pack multiple airplanes, whereas others may pack exactly one.

\item {\bf Maximum generalized assignment.} Even special cases of this problem are known to be APX-hard~\cite{ChekuriK05}, so without some enhancements, existing algorithms would lead to constant loss in optimality. Thus, another technical idea is to establish that slightly \textit{infeasible} generalized assignment solutions are good enough for the purpose of approximating the airplane refueling problem. Nevertheless, the standard practice of guessing ``large assignments'' does not seem to be applicable here, and as a result, we had to devise more elaborated guessing methods.

\item {\bf Avoiding pseudo-polynomial time.} Even though this fact has not been mentioned yet, the length of our underlying timeline is not necessarily polynomial in the number of airplanes. Consequently, the resulting  algorithm is, up until now, a pseudo-PTAS. To obtain a PTAS, our final technical idea is to incorporate a random preprocessing step, in which the original collection of airplanes is divided into subsets, each defining an independent airplane refueling instance. We show that the timeline length of each instance is indeed polynomial in the number of airplanes, and prove that the resulting permutations can be merged into a single permutation with a negligible loss in optimality.
\end{enumerate}

% CONSTANT APPROX %%%%%%%%%%%%%%%%%%%%%%%%%%%%%%%%%%%%%%%%%%%%%%%%%%%%
\section{A Fast Constant-Factor Approximation} \label{sec:constant}
We develop a fast constant-factor approximation algorithm for the airplane refueling problem. Our approach can be broken down into two main steps: first, we reduce a given instance of the problem to an instance of maximum generalized assignment, essentially establishing that any $\rho$-approximation algorithm for generalized assignment implies at least $\rho/4$-approximation for our problem; then, we solve the resulting instance by (essentially any) fast constant-factor approximation for the latter problem.

\smallskip \noindent {\bf Step 1: Reduction to maximum generalized assignment.} We begin by presenting an alteration of the optimal solution so that it satisfies a number of helpful structural properties. This step will simplify our reduction and its analysis later on. Consider the sequence of cumulative sums of consumption rates induced by the optimal permutation $\pi^*$, that is,
$$
c_{ \pi^*(1) }, \; (c_{ \pi^*(1) } + c_{ \pi^*(2) }), \; (c_{ \pi^*(1) } + c_{ \pi^*(2) } + c_{ \pi^*(3) }), \; \ldots \;, \; (c_{ \pi^*(1) } + \cdots + c_{ \pi^*(n) }) \ .
$$
This increasing sequence can be viewed as a collection of points on a timeline. Recall that $C = \sum_{j=1}^n c_{j}$, and observe that the sequence points are all contained in the range $[1, C]$, by the assumption that $c_{\min} = 1$. We partition this timeline geometrically by powers of $2$ into a collection of $O(\log C)$ disjoint buckets, $I_0, I_1, \ldots, I_{\lceil \log C \rceil}$. Specifically, the first bucket $I_0$ spans $[0,2^0)$, the second one $I_1$ spans $[2^0,2^1)$, then $I_2$ spans $[2^1,2^2)$, so forth and so on, where in general, bucket $I_i$ spans $[2^{i-1},2^i)$.

We define an assignment from the set of airplanes to the above collection of buckets, where each airplane is assigned to the largest bucket it intersects in the optimal solution. Formally, the span of the airplane $A_{ \pi^*(j) }$, located at the $j$th position of the optimal permutation $\pi^*$, is defined as the interval $(\sum_{i=1}^{j-1} c_{\pi^*(i)}, \sum_{i=1}^{j} c_{\pi^*(i)}]$ of length $c_{\pi^*(j)}$. We assign an airplane to bucket $I_i$ if its span interval intersects $I_i$, but does not intersect $I_{i+1}$. An important observation regarding this assignment is that the overall length of airplane intervals assigned to each bucket is no more than twice the size of that bucket as bucket sizes form a geometrically increasing sequence by powers of $2$. Also note that no airplane is assigned to $I_0$, since $c_{\min} = 1$.

Having this assignment in mind, we define the following modified solution for our instance. We first pad all original buckets by doubling their size. As a result, the span of $I_0$ becomes $[0,2^1)$, that of $I_1$ is $[2^1,2^2)$, and so on. Based on the previously-defined assignment, airplane intervals are now placed within their respective buckets, where in each one, the intervals are ordered arbitrarily. We emphasize that each bucket can indeed accommodate its assigned airplane intervals since we have just doubled its size. The modified solution is then obtained by concatenating the orderings generated for the buckets $I_1, I_2, \ldots, I_{\lceil \log C \rceil}$. The next lemma shows that this modification degrades the resulting solution by only a constant factor.

\begin{lemma} \label{lemma:reduction}
The distance contribution of each airplane decreases by a factor of at most $4$.
\end{lemma}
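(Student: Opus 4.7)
The plan is to fix an airplane $A_{\pi^*(j)}$ assigned to bucket $I_i$, and compare its original and modified distance contributions, both of which have the same numerator $v_{\pi^*(j)}$. So it suffices to bound the two denominators.

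For the original denominator $C_j^* = \sum_{k=1}^{j} c_{\pi^*(k)}$, I would use the fact that the span $(\sum_{k<j} c_{\pi^*(k)}, C_j^*]$ intersects $I_i = [2^{i-1}, 2^i)$. This immediately forces $C_j^* \geq 2^{i-1}$, so the original distance contribution is at most $v_{\pi^*(j)}/2^{i-1}$.

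For the modified denominator, the airplane now lies inside the padded bucket corresponding to $I_i$, which after doubling occupies $[2^i, 2^{i+1})$. The key point to justify carefully is that after placing $A_{\pi^*(j)}$ in its new slot, the cumulative sum of consumption rates does not exceed $2^{i+1}$. This rests on the observation already recorded in the excerpt: airplanes assigned to any bucket $I_{i'}$ have total span length at most $2 |I_{i'}|$, matching exactly the width of the padded slot $[2^{i'}, 2^{i'+1})$. Concatenating the orderings of buckets $I_1, \ldots, I_i$ therefore produces a cumulative sum of at most $2^{i+1}$ right after $A_{\pi^*(j)}$, and the modified distance contribution is at least $v_{\pi^*(j)}/2^{i+1}$.

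Dividing the two bounds gives a degradation ratio of at least $2^{i-1}/2^{i+1} = 1/4$, as desired. The main obstacle is the bookkeeping in the second step: one needs to verify that padding by a factor of two really does absorb all assigned intervals within each bucket, so that the concatenated schedule lines up with the padded bucket boundaries and the bound $2^{i+1}$ on the new cumulative sum is valid. Everything else is a short half-open-interval calculation.
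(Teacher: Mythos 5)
Your proof is correct and follows essentially the same argument as the paper's: bound the original denominator below by $2^{i-1}$ using the fact that the span intersects $I_i = [2^{i-1}, 2^i)$, bound the new denominator above by $2^{i+1}$ using the doubling/packing observation, and divide. The extra bookkeeping you flag (that padded bucket widths exactly absorb the assigned interval lengths, so the concatenated cumulative sum stays below $2^{i+1}$) is also the justification the paper relies on, stated just before the lemma.
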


We note that the buckets and interval assignments can be defined in a more careful way, so that similar structural properties are satisfied, but with a value loss better than $1/4$. We defer the details of such improvements to the full version. We are now ready to finalize our reduction to the maximum generalized assignment problem:
\begin{itemize}
\item There is a bipartite graph, with $n$ (airplane interval) items on one side and $\lceil \log C \rceil$ (padded bucket) knapsacks on the other side. Every item $j$ has a size of $c_j$, and each knapsack $i$ has a capacity of $2^i$, corresponding to the length of the padded bucket $I_i$.

\item There is an edge between every item $j$ and knapsack $i$ for which that assignment is feasible (namely, $c_j \leq 2^i$), with value $v_j / 2^{i+1}$. Recall that $2^{i+1}$ is the upper endpoint of bucket $I_i$ after padding.
\end{itemize}

Some important remarks about this reduction are in place. First, the reduction can be easily implemented in $O(n \log C)$ time. Second, any feasible solution for the reduced generalized assignment instance implies a feasible permutation for the original airplane refueling instance with distance value at least as good. This permutation can be generated in $O(n + \log C)$ time, similar to the way the modified solution above was generated, namely, we generate an arbitrary ordering of the (airplane interval) items assigned to each (bucket) knapsack, and then concatenate those orderings according to increasing knapsack sizes. Finally, notice that Lemma~\ref{lemma:reduction}, along with our construction of the generalized assignment instance, guarantees that the optimum value of the latter instance is at least $1/4$ of the optimal distance value. 

\smallskip \noindent {\bf Step 2: Solving the generalized assignment instance.} Clearly, we can utilize any approximation algorithm for maximum generalized assignment to solve our reduced instance. Nevertheless, one should take into account the tradeoffs between the resulting approximation guarantees of these algorithms and their running time. We are now ready to establish Theorem~\ref{thm:constant_approx}, noting that the emphasis of this section is more on developing a fast and easy-to-implement algorithm, and less on optimizing constants.

\begin{proofof}{Theorem~\ref{thm:constant_approx}}
Our algorithm begins by reducing an airplane refueling instance to an instance of the generalized assignment problem. This can done in $O(n \log C)$ time, as discussed earlier. Cohen, Katzir and Raz~\cite{CohenKR06} devised a $1/3$-approximation for generalized assignment that runs in $O(NM)$ time, where $N$ is the number of items and $M$ is the number of knapsacks. As a result, we can obtain a solution for the generalized assignment instance in $O(n \log C)$ time. The value of this solution is at least $1/12$ times the optimal distance value, due to the approximation ratio of the generalized assignment algorithm, and due to losing an additional factor of $1/4$ during the reduction. This solution can be converted to a permutation for the airplane refueling instance with at least the same value in additional $O(n + \log C)$ time, as discussed earlier.
\end{proofof}

% PTAS %%%%%%%%%%%%%%%%%%%%%%%%%%%%%%%%%%%%%%%%%%%%%%%%%%%%%%%%%%%%%%%
\section{A Polynomial-Time Approximation Scheme} \label{sec:ptas}

We prove that the optimal drop out permutation for the airplane refueling problem can be efficiently approximated to within any degree of accuracy. Essentially, we show that each and every component of the algorithm proposed in the preceding section, entailing some constant loss in optimality, can be ``fixed'' efficiently to obtain a near-optimal solution, and all those components can be carefully glued together. To simplify the presentation, we concentrate in this section on devising a pseudo-PTAS, whose running time is $\poly( |{\cal I}| ) \cdot O( C^{ \poly(1/\epsilon) } )$, where $|\cal I|$ stands for the input size of a given instance. We then proceed by describing the additional improvements needed to convert this algorithm to a true PTAS, where the dependency on $C$ is eliminated. Due to space limitations, the latter improvement appears in Appendix~\ref{app:ptas}.

\smallskip \noindent {\bf Outline.} Our approach for designing a pseudo-PTAS roughly follows the technical developments presented in Section~\ref{sec:constant}. Namely, we first reduce a given instance of the airplane refueling problem to maximum generalized assignment. This time, we would like this reduction to incur an $\epsilon$-loss in optimality, rather than a constant loss as before. Then, we approximately solve the specific instance resulting from our reduction, while losing a factor of at most $1-\epsilon$. The latter task turns out to be much more involved since even special cases of generalized assignment are known to be APX-hard~\cite{ChekuriK05}, and hence, we cannot simply use existing algorithms. Thus, several crucial modifications and enhancements are added to the overall framework, so that we can tie it all together.

\subsection{Reduction to generalized assignment} \label{subsec:reduction_generalized_assign}

{\bf Where does $\bs{1/4}$ come from?} When one inspects the analysis of our constant-factor approximation, it becomes apparent that, during the reduction step, we actually lose a multiplicative factor of $1/2$ twice. First, we used the upper endpoint of each bucket's segment for bounding the distance contribution of all airplanes assigned to that bucket; these segments are geometrically-increasing by powers of $2$. Second, we doubled the size of each bucket so that it could accommodate all of its assigned airplane intervals. Unfortunately, there is a tradeoff between the bucket jump sizes and the amount of padding needed to ensure that all airplanes intervals can be packed into the stretched buckets. That is, smaller jumps require more padding, meaning that we cannot decrease both loss sources simultaneously.

\smallskip \noindent {\bf Preliminary partitioning scheme.} Let $\epsilon \in (0,1)$ be a given error parameter. We begin by partitioning our timeline $[0,C)$ geometrically by powers of $1+\epsilon$ into a collection of $O( \log_{1+\epsilon} C) = O( \frac{ 1 }{ \epsilon }\log C)$ disjoint buckets. Specifically, the first bucket $I_0$ spans $[0,1)$, $I_1$ spans $[1,1+\epsilon)$, $I_2$ spans $[1+\epsilon,(1+\epsilon)^2)$, and so on, where in general bucket $I_i$ spans $[(1+\epsilon)^{i-1},(1+\epsilon)^i)$. Note that $I_0$ can be thought of as a dummy bucket, since no airplane interval is fully contained in it as $c_{\min} = 1$. Since we wish to avoid the costly padding step, we treat bucket-crossing airplane intervals (i.e., those intersecting two or more buckets) in a different way. The high-level idea is to enhance our bucket structure by forming specialized \textit{single-item buckets} whose purpose is to accommodate bucket-crossing intervals, and to align them with the previously defined buckets, which are now distinguished by referring to them as \textit{multi-item} ones.

\smallskip \noindent {\bf Guessing the position of single-item buckets.} In what follows, we assume without loss of generality that $1/\epsilon$ takes an integer value. For the purpose of defining single-item buckets, we guess the approximate position of all bucket-crossing airplane intervals in the optimal permutation $\pi^*$. Each such interval starts in some bucket and ends in a higher-indexed bucket (not necessarily successive). We divide each multi-item bucket $b$ to $1/\epsilon$ equal-length parts. Formally, if the length of bucket $b$ is $B$, then we partition it into $1/\epsilon$ parts, each of length $\epsilon B$. For each bucket $b$, by means of exhaustive enumeration, we guess two values:
\begin{enumerate}
\item The $\epsilon B$-sized part containing the lower endpoint of a bucket-crossing interval starting in bucket $b$ (extending to a high-indexed bucket). We refer to the lower endpoint of this $\epsilon B$-sized part as $s_b^-$.

\item The $\epsilon B$-sized part containing the upper endpoint of a bucket-crossing interval ending in bucket $b$ (extending to a lower-indexed bucket). We refer to the upper endpoint of this $\epsilon B$-sized part as $s_b^+$.
\end{enumerate}
Clearly, there may be buckets that contain only one endpoint of a bucket-crossing interval, or contain no endpoints at all. We take this situation into account as part of the guessing step, by indicating such occurrences with a separate value, different than $0, 1, \ldots, 1/\epsilon$. Note that since we have two separate guesses for each of the $O( \frac{ 1 }{ \epsilon }\log C)$ buckets, where each guess may take $O( 1/\epsilon )$ distinct values, the overall number of tested options is $( O( 1/\epsilon ) )^{ O( \frac{ 1 }{ \epsilon }\log C) } = O( C^{ O( \frac{ 1 }{ \epsilon } \log \frac{ 1 }{ \epsilon } ) } )$.

\smallskip \noindent {\bf Adding single-item buckets.} Our guesses directly translate into an approximate guess for the position and length of each bucket-crossing airplane interval. In particular, since these intervals are linearly ordered, if we approximately guessed an interval starting point $s_{b}^-$ in bucket $b$, then the ending point of that interval corresponds to our next $s_{b'}^+$ guess in a high-indexed bucket $b'$; for any bucket between $b$ and $b'$, our guesses would indicate that there are no endpoints at all. This defines the set of single-item buckets and their length.

We proceed by adjusting the length of each multi-item bucket so that it is equal to the remainder of this bucket, after eliminating the parts that have just become occupied by single-item buckets (of which there are at most two). Note that some multi-item buckets could end up with zero length, if they are contained in one or two single-item buckets. We do not remove those buckets, except for the case where a multi-item bucket is fully-contained in a single-item bucket that starts strictly before it and ends strictly after it. We emphasize that, although such buckets are removed, they still retain their name (index). This will simplify the presentation when we need to associate multi-item buckets with their initial span.
Lastly, we align the position of all buckets on the timeline so that they are disjoint, by ordering the buckets according to their current starting position. At the end of this step, we have a partition of the timeline into a collection of $O( \frac{ 1 }{ \epsilon } \log C)$ single-item and multi-item buckets.

Prior to finalizing the reduction, we still need to make one crucial adjustment to this partition, where the length of each multi-item bucket is increased by an additive factor of $2 \epsilon B$, where $B$ stands for the initial length of that bucket. The effects of this blow-up on the objective value will be taken into account later on, but intuitively, its purpose is to compensate for the length loss that occurred due to our approximate guesses (where the endpoints of bucket-crossing intervals were rounded to multiples of $\epsilon B$-sized parts).

\smallskip \noindent {\bf Constructing the generalized assignment instance.} We are now ready to explicitly state how the resulting generalized assignment instance is defined:
\begin{itemize}
\item There is a bipartite graph, with $n$ (airplane interval) items on one side and $m = O( \frac{ 1 }{ \epsilon } \log C)$ knapsacks on the other side. Every item $j$ has a size of $c_j$, and each knapsack $i$ has a capacity which is equal to the length of its corresponding bucket. For ease of presentation, we assume that these knapsacks are ordered according to the bucket order in the partition.

\item There is an edge between every item $j$ and knapsack $i$ for which that assignment is feasible, with value $v_j / e_i$. Here, $e_i$ is the upper endpoint of bucket $i$'s segment, after the modifications described above.
\end{itemize}

\begin{lemma} \label{lemma:ptas_reduction}
There is a feasible solution to the resulting generalized assignment instance with objective value at least  $((1+4\epsilon)(1+\epsilon))^{-1}$ times the optimal distance value of the original airplane refueling instance.
\end{lemma}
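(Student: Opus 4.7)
My plan is to lift the optimal permutation $\pi^*$ of the airplane refueling instance into a feasible solution of the constructed generalized assignment instance, and to charge each airplane's GAP value against its distance contribution in $\pi^*$. Concretely, for every airplane $A_j$, let its span in $\pi^*$ be the interval of length $c_j$ ending at $T^*_j = \sum_{\ell \le (\pi^*)^{-1}(j)} c_{\pi^*(\ell)}$. If this span lies inside a single original multi-item bucket $I_b$, I assign $A_j$ to the knapsack of $I_b$; otherwise the span is bucket-crossing, and by the exhaustive enumeration step we may condition on the particular guess consistent with $\pi^*$. For that guess, each bucket-crossing airplane is uniquely matched to a single-item bucket whose endpoints $s_b^-, s_{b'}^+$ bracket its span, and I assign $A_j$ to that knapsack.

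Feasibility would be checked first. Single-item knapsacks receive one airplane by construction, and their capacity $s_{b'}^+ - s_b^-$ dominates $c_j$ because the true endpoints of the span lie in the $\epsilon B$- and $\epsilon B'$-parts anchored at $s_b^-$ and $s_{b'}^+$, respectively. For a multi-item knapsack corresponding to original bucket $I_b$ of length $B$, the non-crossing airplanes assigned to it occupy the middle portion of $I_b$, between any crossing interval ending in $b$ and any crossing interval starting in $b$. Rounding these endpoints up to $\epsilon B$-part boundaries via $s_b^+$ and $s_b^-$ shrinks the usable middle by at most $2\epsilon B$ -- exactly the amount restored by the additive $2\epsilon B$ blow-up -- so $\sum c_j$ fits in the adjusted capacity $(s_b^- - s_b^+) + 2\epsilon B$.

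For the value bound, it suffices to show $e_i \le (1+4\epsilon)(1+\epsilon)\, T^*_j$ for every airplane $A_j$ assigned to bucket $i$. The trimming step reshuffles portions of the original timeline without creating length, so before padding the total length of all surviving buckets equals $C$, and the $2\epsilon B_k$ blow-up of each surviving multi-item bucket $k$ is the only source of extra displacement after alignment. Summing these paddings up to bucket $i$ contributes at most $2\epsilon \sum_{k \le \bar b} B_k \le 2\epsilon (1+\epsilon)^{\bar b}$, where $\bar b$ is the largest original bucket index touched by bucket $i$'s modified location. This gives $e_i \le (1+2\epsilon)(1+\epsilon)^{\bar b}$. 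In the non-crossing case, $\bar b = b$ and $T^*_j \ge (1+\epsilon)^{b-1}$; in the single-item case, $\bar b = b'$ and $T^*_j \ge s_{b'}^+ - \epsilon B' \ge (1+\epsilon)^{b'-1}$, since $s_{b'}^+$ is at least the upper end of the first $\epsilon B'$-part of $I_{b'}$. Either way the ratio $e_i/T^*_j$ is at most $(1+2\epsilon)(1+\epsilon) \le (1+4\epsilon)(1+\epsilon)$, and summing $v_j/e_i \ge v_j/((1+4\epsilon)(1+\epsilon)\, T^*_j)$ over all airplanes yields the claim.

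The main obstacle I anticipate is cleanly tracking the bucket geometry through trimming, removal of fully contained multi-item buckets, and alignment. In particular, one must justify the structural invariant that a surviving multi-item bucket $i$ cannot be preceded in the alignment order by a single-item bucket that straddles past it: any crossing interval from $b < i$ to $b' > i$ would fully contain and therefore remove bucket $i$, so $i$'s survival already precludes this scenario. Once this is in place, the cumulative padding bound above is routine, but the feasibility case analysis -- when $s_b^+$ and $s_b^-$ coincide in the same bucket, or when no crossing interval touches $b$ on one side -- still requires careful bookkeeping of the boundary endpoints.
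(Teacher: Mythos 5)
Your proposal follows essentially the same route as the paper's proof: lift $\pi^*$ to the constructed GAP instance by assigning each span to the single-item bucket it spawned (for crossing intervals) or to its containing multi-item bucket (otherwise), verify feasibility using the $2\epsilon B$ padding, and then bound the ratio between the modified bucket endpoint $e_i$ and the true endpoint $T^*_j$ by a telescoping sum of per-bucket delays. The one place you deviate is in the displacement accounting: you observe that the over-estimation in single-item bucket lengths is exactly offset by the trimming of the adjacent multi-item buckets (so total timeline length is preserved before padding), and therefore only the padding contributes to the shift, giving the sharper $e_i \le (1+2\epsilon)(1+\epsilon)^{\bar b}$; the paper is more conservative and separately charges both the over-estimation and the padding, arriving at $(1+4\epsilon)$. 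Since your tighter bound implies the stated one, the argument is correct, and the structural invariant you flag (a surviving multi-item bucket cannot be straddled by a single-item bucket) is indeed exactly what the paper's removal rule enforces.
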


\subsection{Solving the generalized assignment instance} \label{subsec:solve_generalized_assign}

In the remainder of this section, we develop a pseudo-PTAS for the previously-constructed generalized assignment instance. As mentioned earlier, even special cases of the generalized assignment problem are known to be APX-hard, and thus, we cannot simply employ existing algorithms. We bypass this obstacle by utilizing the special structural properties of the reduced instance, along with some novel insight into the inner-workings of the Shmoys-Tardos LP-rounding algorithm~\cite{ShmoysT93}.

\smallskip \noindent {\bf Size classes and the greedy assignment rule.} Prior to presenting our algorithm, we modify the
above construction by introducing item size classes, and by slightly blowing-up knapsacks; this makes some future claims much easier to establish. Let $c_{\max} = \max_j c_j$, and consider the partition of items into a collection of $O( \frac{ 1 }{ \epsilon } \log  c_{\max} ) = O( \frac{ 1 }{ \epsilon } \log C )$ size classes by powers of $1+\epsilon$. Specifically, recalling that $c_{\min} = 1$, the first class consists of all items whose size resides within $[1,1+\epsilon)$, the second class corresponds to items with size in  $[1+\epsilon,(1+\epsilon)^2)$, so forth and so on. With these definitions in place, we round up the size $c_j$ of each item $j$ to the upper endpoint of the size class in which it resides.

Clearly, this modification leads to a potential infeasibility problem since the assignment we constructed during the proof of Lemma~\ref{lemma:ptas_reduction} could now exceed the capacity of some knapsacks. However, since each item blow-ups in size by a factor of at most $1+\epsilon$, bucket capacities are violated by at most this factor as well. In accordance, to restore feasibility, we modify the given instance by increasing the capacity of each knapsack by a factor of $1+\epsilon$, and adjusting the value gained from each item to incorporate the new upper endpoint of the knapsack. We emphasize that, since the upper endpoint of each knapsack increases by a factor of $1+\epsilon$, the value of the resulting solution may degrade by at most $1+\epsilon$.

In spite of this loss in optimality, it is easy to verify that within each size class, the optimal solution to the generalized assignment instance greedily assigns the corresponding items within their positions, i.e., sorted with respect to their $v_j$ values. We refer to this property as the {\em greedy assignment rule}, and mention that it follows from an elementary swapping argument\footnote{Given two items $j_1 \neq j_2$ with $v_{j_1} \geq v_{j_2}$, if the item $j_2$ is assigned to a lower-indexed knapsack then swapping between these items can only increase the objective function (while preserving capacities, due to identical item sizes).}. To summarize, from this point on, we assume that there is a feasible solution for our generalized assignment instance that assigns the items of each class (within their positions) according to the greedy assignment rule (i.e., by non-increasing $v_j$ order), and has an objective value of at least $1 / ((1+4\epsilon)(1+\epsilon)^2)$ times the optimal distance value in the original airplane refueling instance.
% TODO: we double counted the sorting modification since we define the knapsacks before the modification, and thus, to keep the same structure, we had to increase their capacity. If we would have defined the instance after this modification, we could have spared a factor of (1 - \eps).

\smallskip \noindent {\bf The guessing procedure.} Unfortunately, the greedy assignment rule is only partially constructive, due to the statement ``within their positions''. That is, if we knew how many items each knapsack contains from each size class in the optimal solution, we could have greedily assigned the items of each class to the relevant knapsacks to attain a near-optimal solution. However, even with approximate guesses for these quantities, the resulting number of guesses would be too large for our purposes.

We proceed by showing how to test $f(\epsilon)$ options for each size class, and to obtain a well-structured instance of generalized assignment whose properties can be exploited within an LP-rounding algorithm, drawing on the ideas of Shmoys and Tardos~\cite{ShmoysT93}. Here, $f(\epsilon)$ is some function of $\epsilon$ to be determined later. Note that since there are only $O( \frac{ 1 }{ \epsilon } )$ size classes, the overall number of options to test in this step is $(f(\epsilon))^{O( \frac{ 1 }{ \epsilon } \log C )} = O( C^{O( \frac{ 1 }{ \epsilon } \log f(\epsilon))} )$. Prior to delving into technicalities, we describe the structural properties of the reduced instance that enable us to test this relatively small number of options:
\begin{itemize} \label{items:properties_instance}
\item There are $O( \frac{ 1 }{ \epsilon } \log C)$ knapsacks that can be ordered according to the linear order of their corresponding buckets. Some knapsacks are known to contain exactly one item (corresponding to single-item buckets), while the remaining knapsacks may contain any number of items (corresponding to multi-item buckets).

\item The capacity of multi-item knapsack $i$ is at most $(1+\epsilon)((1+\epsilon)^{i} - (1+\epsilon)^{i-1}) = \epsilon \cdot (1+\epsilon)^{i}$. This follows from the initial geometric partition of the timeline, and the fact that we inflated the capacity of each knapsack by a factor of $1+\epsilon$ when item size classes were created.

\item The overall capacity of all knapsacks before multi-item knapsack $i$ in the ordering is at least $(1+\epsilon)^{i}$. Again, this follows from the geometric partition and the inflation in knapsack capacities.

\item The overall capacity of all knapsacks before and including multi-item knapsack $i$ is at most $(1+ 4 \epsilon)(1+\epsilon)^{i+2}$. The proof of this claim is implicit in the proof of Lemma~\ref{lemma:ptas_reduction}. Specifically, one should recall that any knapsack corresponds to a multi-item bucket that was not removed. This bucket was not fully-contained in a single-item bucket, and thus, one can define an infinitesimally small virtual segment that is internal to that bucket and can be used as the airplane interval in the proof. As a result, the endpoint of that bucket may have shifted by at most $(1+ 4 \epsilon)(1+\epsilon)$ times its initial position, $(1+\epsilon)^i$. The inflation in the capacity of each knapsack contributes another $(1+\epsilon)$-factor.
\end{itemize}

We focus on a single class, consisting of items whose size is in $[c,(1+\epsilon) \cdot c)$, for some $c > 0$. Let $M_1, M_2, \ldots$ be the ordered set of multi-item knapsacks, noting that the capacity of each such knapsack $i$ is no more than $\epsilon \cdot (1+\epsilon)^{i}$. Also recall that there may be a single-item knapsack before and after every multi-item knapsack. So, the combined sequence of knapsacks is $S_1, M_1, S_2, M_2, \ldots$ where each $S_i$ is a (possibly empty) single-item knapsack. We consider a partition of this sequence into three consecutive groups.
% TODO: in the above, we need to say that thethe indices of those knapsacks may not be consecutive due to the previous removal of the corresponding buckets. In what follows, we assume that the indices are consecutive, noting that more complicated case requires a minor discussion regarding how the fix the indices under consideration.

\smallskip \noindent {\bf Group I: small knapsacks.}
% TODO: here we assume that $M_p$ exists (and not removed).
Let $p$ be the unique integer for which $\epsilon \cdot (1+\epsilon)^{p} < c \leq \epsilon \cdot (1+\epsilon)^{p+1}$. Notice that each of the multi-item knapsacks $M_1, \ldots, M_p$ is not big enough to contain an item from the size class in question. Also recall that the capacity of all knapsacks before and including $M_p$ is at most $(1 + 4 \epsilon)(1+\epsilon)^{p+2}$. This implies that there could be at most
$$
f_1(\epsilon) = \frac{(1+ 4 \epsilon)(1+\epsilon)^{p+2}}{c} < \frac{(1+ 4 \epsilon)(1+\epsilon)^{p+2}}{\epsilon \cdot (1+\epsilon)^{p}} = \frac{(1+ 4 \epsilon)(1+\epsilon)^2}{\epsilon} = O\left(\frac{ 1 }{ \epsilon }\right)
$$
items that are assigned to the single-item knapsacks $S_1, \ldots, S_p$, and we can guess exactly how many there are. This clearly requires testing at most $f_1(\epsilon)$ options. Suppose that $t$ is our guess in this case, then we put aside the $t$ items with the highest $v_j$ values in the size class $[c,(1+\epsilon) \cdot c)$, but do not assign them to any of the knapsacks yet.

\smallskip \noindent {\bf Group II: medium knapsacks.}
% TODO: here we again assume that $M_{p+q}$ exists (and not removed).
Let $q = 2\lceil\log_{1+\epsilon}(1/ \epsilon)\rceil + 3$, and recall that the capacity of all knapsacks before and including $M_{p+q}$ is at most
$$
(1 + 4 \epsilon)(1+\epsilon)^{p+q+2} < \frac{c \cdot (1 + 4 \epsilon) (1+\epsilon)^{q+2}}{\epsilon} \leq \frac{c \cdot (1 + 4 \epsilon) (1+\epsilon)^7}{\epsilon^{3}} = c \cdot O\left(\frac{ 1 }{ \epsilon^3 }\right) \ .
$$
Consequently, the collection of knapsacks $S_{p+1}, M_{p+1}, \ldots, S_{p+q}, M_{p+q}$ contains $O( 1 / \epsilon^3 )$ class items in total. We can guess the number of class items assigned to each of those $2q$ knapsacks, which amounts to at most
$$
f_2(\epsilon) = \left(O\left(\frac{ 1 }{ \epsilon^3 }\right) \right)^{2q} = \left(O \left( \frac{ 1 }{ \epsilon } \right) \right)^{ O( \frac{ 1 }{ \epsilon } \log \frac{ 1 }{ \epsilon } ) }
$$
options. Given the number of items to be assigned in each of those knapsacks, we now make use of the greedy assignment rule. Specifically, we go over the knapsacks $S_{p+1}, M_{p+1}, \ldots, S_{p+q}, M_{p+q}$ in this order, and assign the remaining items of the size class $[c,(1+\epsilon) \cdot c)$ according to non-increasing $v_j$'s. We do not assign any of the $t$ items that were excluded due to small knapsacks.

\smallskip \noindent {\bf Group III: big knapsacks.} For the remaining set of knapsacks (those that are neither small or medium), we do not make any guesses or assignments in advance. Still, notice that the overall capacity of all knapsacks before the first big knapsack (either $S_{p+q+1}$ or $M_{p+q+1}$) is at least $(1+\epsilon)^{p+q}$, while the size of any item in the size class under consideration is at most $(1+\epsilon) \cdot c \leq \epsilon \cdot (1+\epsilon)^{p+2}$. This observation implies that the size of any such item is no more than $(\epsilon \cdot (1+\epsilon)^{p+2})/(\epsilon \cdot (1+\epsilon)^{p+q}) \leq (\epsilon^2/ (1+\epsilon))$-fraction of our simple lower bound on the overall capacity of previous knapsacks.

\smallskip The preceding discussion proves that, as we claimed earlier, the total number of options $f(\epsilon)$ to be tested for each size class can be upper bounded by a function of $\epsilon$ since $f(\epsilon) \leq f_1(\epsilon) \cdot f_2(\epsilon) = (O ( \frac{ 1 }{ \epsilon } ) )^{ O(\frac{ 1 }{ \epsilon } \log \frac{ 1 }{ \epsilon }) }$.

\smallskip \noindent {\bf Leveraging the Shmoys-Tardos algorithm.} Shmoys and Tardos~\cite{ShmoysT93} developed an LP-based algorithm, making use of parametric pruning to approximate the minimum generalized assignment problem. Even though this algorithm was originally designed for the minimization variant, Chekuri and Khanna~\cite{ChekuriK05} showed that it can be adapted to the maximization variant with small modifications. Our approach is based on strengthening their linear relaxation with valid assignment constraints, based on the guessing procedure described earlier, as well as with capacity constraints for single-item knapsacks. To this end, we can assume without loss of generality that the optimum value $\opt$ of our maximum generalized assignment instance is known in advance\footnote{This assumption can be justified by arguing that we can compute an estimate $V$ satisfying $\opt \in [V,12V]$ by employing the constant-factor approximation proposed in Section~\ref{sec:constant}. It then remains to consider powers of $1+\epsilon$ within $[V,12V]$ as candidate values for $\opt$, incurring an additional loss of $1+\epsilon$ in optimality.}. With this value at hand, we focus on a feasibility-LP that admits an integral assignment of value at least $\opt$. We then demonstrate how to round an optimal fractional solution to an integral assignment, that may very well be infeasible, due to over-packing knapsacks. However, we continue by proving that such solutions can be translated back to the original airplane refueling instances, losing only a very small fraction in value.

\smallskip \noindent {\bf The linear program.} Recall that our generalized assignment instance consists of $n$ items and $m$ knapsacks. Each item $j$ has size $c_j$, and each knapsack $i$ has capacity $B_i$. We denote the value gained by packing item $j$ in knapsack $i$ as $v_{ji}$. We also denote by $S$ the collection of single-item knapsacks. Furthermore, let $T_i$ be the set of items assigned to (medium) knapsack $i$ in our guessing step, and $T = [n] \setminus (\cup_{i=1}^n T_i)$ be the set of items that were not assigned to any (medium) knapsack. Finally, let $P_j$ be the set of all medium knapsacks for item $j$'s size class. We consider the following feasibility-type linear program:
$$
\begin{array}{lll}
(\rm LP) \quad & (1) \quad {\displaystyle \sum_{j \in [n]}\sum_{i \in [m]} x_{ji}v_{ji} \geq \opt} &  \\
 & (2) \quad {\displaystyle \sum_{i \in [m]} x_{ji} = 1} & \forall \, j \in [n] \\
 & (3) \quad {\displaystyle \sum_{j \in [n]} x_{ji} c_{j} \leq B_i} & \forall \, i \in [m] \\
 & (4) \quad {\displaystyle \sum_{j \in [n]} x_{ji} = 1} & \forall \, i \in S \\
 & (5) \quad x_{ji} = 1 & \forall \, j \in T_i, \, i \in [m] \\
 & (6) \quad x_{ji} = 0 & \forall \, j \in T, \, i \in P_j \\
 & (7) \quad x_{ji} \geq 0 & \forall \, i \in [m], \, j \in [n] %
\end{array}
$$

In an integral solution, the variable $x_{ji}$ indicates whether item $j$ is packed in knapsack $i$. Constraint~(1) guarantees that the solution value is at least $\opt$; constraint~(2) ensures that each item is packed in some knapsack; constraint~(3) guarantees that the capacity of each knapsack is respected; constraint~(4) ensures that each single-item knapsack is assigned exactly one item; and constraints~(5) and~(6) guarantee that all items guessed to reside in a medium knapsack $i$ are assigned to that knapsack, and that no other item for which knapsack $i$ was medium can be assigned to it. We emphasize that the (integral) feasibility of this program is guaranteed by our construction. Specifically, throughout this section, we defined a well-structured generalized assignment instance that can accommodate the original optimal solution.

\smallskip \noindent {\bf The rounding procedure.} Given a feasible fractional solution $x_{ji}$ to the linear program~(LP), whose existence has just been established, we employ the deterministic rounding algorithm of Shmoys and Tardos~\cite{ShmoysT93} to attain an integral assignment. For our purposes, it is sufficient to mention that,
for the minimization variant, their algorithm is based on translating $x_{ji}$ into a feasible fractional matching in a bipartite graph (of identical cost) with items on one side and (integer-capacitated) knapsacks on the other side. Here, the capacity of knapsack $i$ is determined by rounding up the fractional number of items assigned to this knapsack, making it $\lceil \sum_{j \in [n]} x_{ji} \rceil$. The important observation is that, since the vertices of bipartite matching polytopes are integral, one can then compute an integral matching (i.e., item-to-knapsack assignment) whose cost is at most $\sum_{j \in [n]} \sum_{i \in [m]} x_{ji} v_{ji}$. Alternatively, as noted by Chekuri and Khanna~\cite{ChekuriK05}, we can efficiently identify an integral assignment whose profit (or value) is at least $\sum_{j \in [n]} \sum_{i \in [m]} x_{ji} v_{ji}$ for the maximization variant.

For our specific purposes, it would be sufficient to focus attention on the guarantees of the above algorithm in terms of objective value, as well as on the structural properties of the resulting assignment, which can be briefly summarized as follows:
\begin{enumerate}
\item The value of the resulting assignment is at least $\sum_{j \in [n]} \sum_{i \in [m]} x_{ji} v_{ji} \geq \opt$.

\item The number of items assigned to each knapsack $i$ is at most $\lceil \sum_{j \in [n]} x_{ji} \rceil$.

\item Each item $j$ fits by itself into the knapsack $i$ to which it is assigned, namely, $c_j \leq B_i$.

\item The capacity of some knapsacks may be violated, but if this happens for some knapsack $i$, there exists a single \textit{infeasibility item} $j$ whose removal restores the feasible of that knapsack. Furthermore, this item will not be one with an explicit assignment constraint (i.e., $x_{ji} = 1$).
\end{enumerate}

\noindent {\bf Analysis.} The first observation needed to complete the analysis of our algorithm is that the resulting integral assignment can only generate infeasible packings for multi-item knapsacks. To establish this claim, note that by property~2, the number of items assigned to each knapsack $i$ is at most $\lceil \sum_{j \in [n]} x_{ji} \rceil$. However, for single-item knapsacks, constraint~(4) ensures that the latter term is in fact equal to $1$. Furthermore, by property~3, an assigned item is guaranteed to fit by itself within the knapsack's capacity.

Now, consider some infeasible multi-item knapsack, whose capacity is violated by the resulting assignment. The crucial observation is that the infeasibility item of this knapsack, mentioned in property~4, must have identified this knapsack as being big (for its size class) during our guessing step. Clearly, this knapsack could not have been identified as a small one, since its capacity is strictly smaller than the size of that item, violating property~3. Moreover, this knapsack could not have been identified as a medium one due to combining property~4 and constraint~(5).

That said, we do not eliminate any infeasibility item from its corresponding knapsack. Instead, we prove that this (infeasible) assignment actually translates back to the original airplane refueling instance, while losing a factor of at most $1+\epsilon$. It is worth noting that this translation is created by first generating an arbitrary order for the items (corresponding to airplane intervals) assigned to each knapsack, and then concatenating those orderings by increasing knapsack indices, i.e., according to the order $S_1, M_1, S_2, M_2, \ldots$. To conclude the analysis, it remains to prove that the resulting distance contribution of each airplane does not deteriorate by much in comparison to the assignment value of its corresponding item.

\begin{lemma} \label{lem:items_vs_airplanes}
The distance contribution of each airplane $A_j$ is at least $1 /( 1+\epsilon )$ times the assignment value of item $j$ in our generalized assignment solution.
\end{lemma}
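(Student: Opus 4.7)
The plan is to show that the cumulative consumption of airplane $A_j$ in the constructed permutation, $D_j = \sum_{k \leq \pi^{-1}(j)} c_{\pi(k)}$, satisfies $D_j \leq (1+\epsilon) \cdot e_i$, where $i$ is the knapsack to which item $j$ is assigned. If this holds, then the distance contribution $v_j / D_j$ is at least $v_j / ((1+\epsilon) e_i) = (1/(1+\epsilon)) \cdot (v_j/e_i)$, matching the claim. Since items are concatenated in order of increasing knapsack index with arbitrary order within each knapsack, the worst case gives $D_j \leq \sum_{i' \leq i} T_{i'}$, where $T_{i'}$ denotes the total size of items assigned to knapsack $i'$.

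Next, I would split this quantity into the capacity sum and an over-packing term. Knapsacks $i'$ that are not over-packed satisfy $T_{i'} \leq B_{i'}$, and since buckets are aligned consecutively on the timeline, $\sum_{i' \leq i} B_{i'} = e_i$. For over-packed knapsacks, property~4 of the Shmoys--Tardos rounding yields $T_{i'} \leq B_{i'} + c_{j^*_{i'}}$, where $j^*_{i'}$ is the unique infeasibility item. Hence $D_j \leq e_i + \Delta$, with $\Delta = \sum_{i' \leq i,\, \text{over-packed}} c_{j^*_{i'}}$. A quick check using properties~2--3 and constraint~(4) shows that only multi-item knapsacks can be over-packed: single-item knapsacks contain exactly one item by constraint~(4), and that item fits in its knapsack by property~3. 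The remaining task is therefore to establish $\Delta \leq \epsilon \cdot e_i$.

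The heart of the proof is to bound each $c_{j^*_{i'}}$ using the structure of the guessing procedure. Combining property~4 with constraints~(5)--(6) and property~3, any infeasibility item of a multi-item knapsack with multi-item index $i_M$ must lie in a size class for which $i_M$ was classified as \emph{big}, that is, $i_M > p_c + q$, where $c$ is the lower endpoint of its size class and $q = 2\lceil \log_{1+\epsilon}(1/\epsilon) \rceil + 3$. Combined with $c \leq \epsilon(1+\epsilon)^{p_c + 1}$ from the definition of $p_c$, this yields $c_{j^*_{i'}} \leq (1+\epsilon) c \leq \epsilon (1+\epsilon)^{i_M - q + 1}$. Summing geometrically over the multi-item indices $i_M$ from $1$ to $I_M$ (where $I_M$ is the number of multi-item knapsacks at position $i$ or earlier), the total telescopes to at most $(1+\epsilon)^{I_M + 2 - q} \leq (\epsilon^2/(1+\epsilon)) \cdot (1+\epsilon)^{I_M}$, using the calibrated choice of $q$. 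The third structural property of the reduced instance gives $e_i \geq (1+\epsilon)^{I_M}$, whence $\Delta \leq (\epsilon^2/(1+\epsilon)) \cdot e_i \leq \epsilon \cdot e_i$ for $\epsilon \in (0,1]$, completing the argument.

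I expect the last quantitative step to be the main obstacle. A naive estimate from property~3 alone, $c_{j^*_{i'}} \leq B_{i'}$, would only yield $\Delta \leq e_i$ and a useless factor of $2$. It is precisely through the carefully calibrated value of $q$---which forces every infeasibility item to be exponentially smaller than the cumulative capacity at its knapsack---that the geometric sum collapses to an $O(\epsilon^2)$ fraction of $e_i$, delivering the desired $(1+\epsilon)$-approximation.
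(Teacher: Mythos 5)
Your proposal is correct and follows essentially the same strategy as the paper's proof: restrict attention to over-packed multi-item knapsacks, observe that each infeasibility item must have classified its knapsack as big, bound its size by $\epsilon(1+\epsilon)^{i_M-q+1}$ via the definition of $p_c$ and $q$, and collapse the geometric sum to an $O(\epsilon^2)$-fraction of $e_i$ using the structural lower bound on cumulative capacity. The only cosmetic difference is that you sum over all multi-item indices up to $I_M$ and compare against $(1+\epsilon)^{I_M}$, whereas the paper sums only over the over-packed indices $r_1<\cdots<r_k$ and compares against $(1+\epsilon)^{r_k}$; both yield the same $1/(1+\epsilon)$ guarantee.
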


\begin{theorem} \label{thm:pseudoapprox}
For any $\epsilon \in (0,1)$, we can approximate the airplane refueling problem to within factor $1-\epsilon$. The running time of our algorithm is $\poly( |{\cal I}| ) \cdot O( C^{ \poly(1/\epsilon) } )$, where $|\cal I|$ stands for the input size of a given instance.
\end{theorem}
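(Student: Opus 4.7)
The plan is to assemble the pieces already developed in Section~\ref{sec:ptas}, track the multiplicative losses introduced at each stage, and bound the total guessing overhead. Concretely, given an error parameter $\epsilon \in (0,1)$ (which I will shrink by a constant factor at the end), my algorithm iterates over all choices in the guessing step of Section~\ref{subsec:reduction_generalized_assign} (the $\epsilon B$-sized parts containing endpoints of bucket-crossing intervals), all guesses for $\opt$ (as powers of $1+\epsilon$ within a factor-$12$ window anchored at the constant-factor solution from Theorem~\ref{thm:constant_approx}), and all guesses in the size-class procedure of Section~\ref{subsec:solve_generalized_assign} (Groups~I and~II). For each combined guess I solve the feasibility LP, apply Shmoys-Tardos rounding, and convert the resulting (possibly capacity-violating) assignment back to a permutation by concatenating items within each knapsack and then ordering knapsacks as $S_1, M_1, S_2, M_2, \ldots$; finally I output the best permutation produced.

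For the approximation guarantee I would simply chain the losses already catalogued. Lemma~\ref{lemma:ptas_reduction} supplies a feasible generalized-assignment solution of value at least $((1+4\epsilon)(1+\epsilon))^{-1}\cdot \opt_{\text{AR}}$; rounding up item sizes into size classes and inflating knapsack capacities costs an additional factor $1+\epsilon$, yielding $((1+4\epsilon)(1+\epsilon)^2)^{-1}\cdot \opt_{\text{AR}}$. The correct guesses exist, so the feasibility LP admits an integral solution of value at least this much, and the LP optimum is therefore at least as large; the rounding procedure returns an integral assignment whose value is no smaller; and Lemma~\ref{lem:items_vs_airplanes} guarantees that translating back to a permutation loses at most another $1+\epsilon$ factor per airplane. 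The composite loss is $(1+4\epsilon)(1+\epsilon)^3$, which can be made at most $1/(1-\epsilon)$ by replacing $\epsilon$ with $\Theta(\epsilon)$ throughout.

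For the running time I would multiply the guess counts with the LP-solving cost. The single-item bucket guesses contribute $O( C^{ O( \frac{1}{\epsilon} \log \frac{1}{\epsilon} ) } )$ options, as already noted. The $\opt$-guessing adds only an $O( \frac{1}{\epsilon} )$ factor. Within each of the $O( \frac{1}{\epsilon} \log C )$ size classes we test $f(\epsilon) \le (O(\frac{1}{\epsilon}))^{O(\frac{1}{\epsilon}\log\frac{1}{\epsilon})}$ options, giving $(f(\epsilon))^{O(\frac{1}{\epsilon}\log C)} = O( C^{ O( \frac{1}{\epsilon^{2}} \log^2 \frac{1}{\epsilon} ) } )$ overall options. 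For each guess, building and solving the LP and applying the Shmoys-Tardos rounding takes time polynomial in $|\mathcal{I}|$ and in $1/\epsilon$. Multiplying through yields the claimed $\poly(|\mathcal{I}|) \cdot O( C^{\poly(1/\epsilon)} )$ bound.

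The main obstacle I foresee is not any single step of the above chain, but verifying that the four structural properties of the Shmoys-Tardos output (especially property~4, about the infeasibility item) are actually compatible with the conclusion of Lemma~\ref{lem:items_vs_airplanes}: the distance contribution bound is derived assuming a specific concatenation order and that each knapsack is over-packed by at most one item whose singular presence can be absorbed into the geometric $1+\epsilon$ slack. I would therefore be careful to argue, using the earlier observation that infeasibility items only appear in big knapsacks together with properties~2 and~3, that the chosen concatenation order places every airplane into a window of the timeline whose cumulative consumption-rate prefix differs from the knapsack's upper endpoint by at most a $1+\epsilon$ factor; this is precisely what Lemma~\ref{lem:items_vs_airplanes} encodes, and all other ingredients of the theorem follow by bookkeeping.
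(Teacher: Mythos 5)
Your proposal is correct and follows essentially the same approach as the paper, which states Theorem~\ref{thm:pseudoapprox} without an explicit proof and leaves it to be assembled from Lemma~\ref{lemma:ptas_reduction}, the size-class/capacity inflation, the existence of the right guesses, the Shmoys--Tardos rounding guarantees, and Lemma~\ref{lem:items_vs_airplanes}, exactly as you do; your running-time accounting also matches the exponents stated in Section~\ref{sec:ptas}. The only minor bookkeeping slip is that your composite loss $(1+4\epsilon)(1+\epsilon)^{3}$ omits the additional $(1+\epsilon)$ factor incurred by guessing $\opt$ as a power of $1+\epsilon$, but this is harmlessly absorbed by the same rescaling of $\epsilon$ you already invoke.
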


% REFERENCES %%%%%%%%%%%%%%%%%%%%%%%%%%%%%%%%%%%%%%%%%%%%%%%%%%%%%%%
\bibliographystyle{abbrv}
\bibliography{AirplaneRefuling}

% APPENDICES %%%%%%%%%%%%%%%%%%%%%%%%%%%%%%%%%%%%%%%%%%%%%%%%%%%%%%%
\appendix

% LOGARITHMIC APPROXIMATION %%%%%%%%%%%%%%%%%%%%%%%%%%%%%%%%%%%%%%%%
\section{A Logarithmic Approximation} \label{app:log_approx}

We describe how to approximate the airplane refueling problem within a factor of $\Omega(1 / \log n )$. Specifically, we first argue that a greedy decision rule can be used to compute an optimal drop out permutation when all planes have identical $v_j / c_j$ ratios. This observation is then employed in conjunction with the classify-and-select method to obtain a logarithmic approximation for the general case.

\begin{lemma} \label{lem:equal_ratio}
Given an airplane refueling instance where all $v_j / c_j$ ratios are equal, an optimal permutation orders the airplanes according to non-decreasing $c_j$ values.
\end{lemma}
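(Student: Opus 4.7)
The plan is to use a straightforward adjacent-exchange argument. First I would simplify the objective: since every plane satisfies $v_j = r \cdot c_j$ for a common ratio $r$, the quantity to maximize becomes
$$
\sum_{j=1}^n \frac{v_{\pi(j)}}{\sum_{k=1}^j c_{\pi(k)}} \;=\; r \sum_{j=1}^n \frac{c_{\pi(j)}}{s_j},
$$
where $s_j = \sum_{k=1}^j c_{\pi(k)}$ denotes the cumulative consumption rate after position $j$. This reduces the claim to showing that the permutation maximizing $\sum_j c_{\pi(j)}/s_j$ is the one sorting the $c_j$'s in non-decreasing order.

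Next I would establish the following exchange property: if $\pi$ has two adjacent positions $i, i+1$ with $c_{\pi(i)} > c_{\pi(i+1)}$, then swapping them strictly increases the objective. Write $a = c_{\pi(i)}$, $b = c_{\pi(i+1)}$, and let $S = \sum_{k=1}^{i-1} c_{\pi(k)}$. Since positions $1,\ldots,i-1$ contribute the same value before and after the swap, and since $s_j$ for $j \geq i+2$ is unaffected (because $S + a + b = S + b + a$), the entire change comes from positions $i$ and $i+1$. So I need to compare
$$
\frac{a}{S+a} + \frac{b}{S+a+b} \quad\text{(original)} \qquad\text{vs.}\qquad \frac{b}{S+b} + \frac{a}{S+a+b} \quad\text{(swapped)}.
$$

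The main (and only nontrivial) step is a short calculation: the swapped value minus the original value equals
$$
\left(\frac{b}{S+b} - \frac{a}{S+a}\right) + \frac{a-b}{S+a+b} \;=\; (a-b)\left[\frac{1}{S+a+b} - \frac{S}{(S+a)(S+b)}\right].
$$
Clearing denominators inside the bracket, the sign is determined by $(S+a)(S+b) - S(S+a+b) = ab > 0$. Since $a > b$, the overall difference is strictly positive, so the swap strictly improves the objective.

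Finally I would conclude: any permutation that is not sorted by non-decreasing $c_j$ contains an adjacent inversion, and by the exchange property above it is strictly suboptimal. Hence the optimal permutation orders airplanes according to non-decreasing $c_j$ values, as claimed. I do not foresee any real obstacle; the entire argument hinges on the one-line inequality $ab > 0$.
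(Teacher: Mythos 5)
Your proof is correct and follows essentially the same adjacent-exchange argument as the paper; the only cosmetic difference is that you normalize $v_j = r c_j$ at the outset, whereas the paper carries the $v_i, v_j$ terms through the calculation and substitutes the common ratio at the end. The key algebraic step (the sign reduces to $ab > 0$) is exactly the paper's numerator $c_j v_i (c_j - c_i) > 0$ in disguise.
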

\begin{proof}
Suppose there exists an optimal permutation $\pi$ that violates the above-mentioned condition. In particular, this means that there are two airplanes $A_i$ and $A_j$ with $c_i < c_j$ that are ordered one after the other, that is, $\pi(\ell) = j$ and $\pi(\ell + 1) = i$. Let $C_\ell = \sum_{k=1}^{\ell-1} c_{\pi(k)}$, and notice that the combined distance contribution of these two airplanes is
$$
D_{j \to i} = \frac{v_j}{C_\ell + c_j} + \frac{v_i}{C_\ell + c_j + c_i} \ .
$$
However, if we switch the positions of $A_i$ and $A_j$, their distance contribution becomes
$$
D_{i \to j} = \frac{v_i}{C_\ell + c_i} + \frac{v_j}{C_\ell + c_j + c_i} \ ,
$$
while the contribution of any other airplane remains unchanged. This implies that by switching the positions of the airplanes, the overall change in the traveled distance is precisely
\begin{eqnarray*}
D_{i \to j} - D_{j \to i} & = & \frac{ v_i c_j (C_\ell + c_j) - v_j c_i (C_\ell + c_i) }{(C_\ell + c_i)(C_\ell + c_j)(C_\ell + c_j + c_i)} \\
& = & \frac{c_i c_j \cdot\left(\frac{v_i}{c_i} (C_\ell + c_j) - \frac{v_j}{c_j} (C_\ell + c_i)\right)}{(C_\ell + c_i)(C_\ell + c_j)(C_\ell + c_j + c_i)} \\
& = & \frac{ c_j v_i (c_j - c_i) }{(C_\ell + c_i)(C_\ell + c_j)(C_\ell + c_j + c_i)} \\
& > & 0 \ ,
\end{eqnarray*}
where the third equality holds since $v_j / c_j = v_i / c_i$, and the last inequality follows as $c_i < c_j$. This contradicts the optimality of the original permutation $\pi$.~ 
\end{proof}

We can now utilize the above lemma to approximate the general case as follows. Let $\rho_{\max} = \max_j (v_j / c_j)$ be the maximum ratio over all airplanes. Clearly, any plane $A_j$ can travel using its own fuel tank up to a distance of $v_j / c_j$, while on the other hand, this ratio is also an upper bound on the distance contribution of $A_j$ with respect to any permutation, implying that the optimal distance value satisfies $\rho_{\max} \leq \opt \leq n \rho_{\max}$. As a result, one can disregard all airplanes with $v_j / c_j < \rho_{\max} / (2n)$, since their total contribution to the optimal solution is at most $\opt/2$. All remaining airplanes have ratios in $[\rho_{\max}/(2n), \rho_{\max}]$, meaning that we can partition them based on this parameter into $O(\log n)$ classes by, say, powers of $2$. Focusing on a single class, all ratios can be made identical by rounding down the $v_j$ value of each plane, losing a factor of at most $2$ in optimality. For this setting, we can apply Lemma~\ref{lem:equal_ratio} to compute an optimal permutation restricted to this class, and augment it into a complete permutation by appending all other airplanes in arbitrary order (so that the distance contribution of the underlying class planes is unaffected by this augmentation). Finally, to obtain an $\Omega(1 / \log n)$-approximation, it remains to try out all classes, and pick the one whose resulting permutation has maximum distance.

% ADDITIONAL PROOFS %%%%%%%%%%%%%%%%%%%%%%%%%%%%%%%%%%%%%%%%%%%%%%%%
\section{Additional Proofs} \label{app:more_proofs}

\subsection{Proof of Lemma~\ref{lemma:reduction}}

Consider airplane $A_{ \pi^*(j) }$, located at the $j$th position of the optimal permutation $\pi^*$, and suppose it is assigned to bucket $I_i$. Note that $i \geq 1$, since no airplane is assigned to $I_0$. Therefore, the distance contribution of $A_{ \pi^*(j) }$ is $v_{\pi^*(j)}/\sum_{k=1}^{j} c_{\pi^*(k)} \leq v_{\pi^*(j)}/ 2^{i-1}$ since bucket $I_i$ spans $[2^{i-1}, 2^i)$. On the other hand, in the modified solution, the span of bucket $I_i$ once all buckets are doubled is $[2^i, 2^{i+1})$. As a result, the distance contribution of $A_{ \pi^*(j) }$ is at least $v_{\pi^*(j)}/2^{i+1}$ as that bucket accommodates the airplane interval.

The combination of these bounds imply that the distance contribution of each airplane in the optimal solution is no more than $4$ times its contribution in the modified solution.

\subsection{Proof of Lemma~\ref{lemma:ptas_reduction}}

We design a modified solution for the airplane refueling instance, which is based on the optimal permutation, and analyze its properties. Our solution is generated by considering each of the airplane intervals in the optimal permutation and assigning them to the partition buckets defined above. Consider some airplane interval. This interval either gave rise to a single-item bucket or not. If it gave rise to a single-item bucket then we assign it to that bucket. Notice that this bucket can accommodate the interval as a result of the way we guessed the endpoints of the bucket. If the airplane did not give rise a single-item bucket then it was contained in an (initial) bucket, so we assign it to the corresponding multi-item bucket in our partition. Notice that any multi-item bucket $b$ whose initial length was $B$ can accommodate all the airplane intervals that are assigned to it since we increased its length by $2 \epsilon B$, and by that, compensated for losing a length of at most $2 \epsilon B$ due to approximate guesses. The airplane intervals in each multi-item bucket are ordered according to their order in the optimal permutation.

The modified solution orders the airplanes in the same way as the optimal permutation. However, our partition structure may have inserted some delays that are reflected in the start and end times of the intervals, and therefore, the value that we attain from each interval may have decreased. We next show that the contribution from each interval is no less than $\rho = 1 / ((1+4\epsilon)(1+\epsilon))$ times its contribution in the optimal solution. We prove that by demonstrating that for each interval, the endpoint of the partition bucket that ultimately contains it is later than the endpoint of that interval in the optimal permutation by no more than $1/\rho$ times. Consider some airplane $j$ whose interval endpoint in the optimal solution was $x$, and suppose this point resides in the bucket $I_k$, that is, $x \in [(1+\epsilon)^{k-1},(1+\epsilon)^k)$. First notice that the contribution of that airplane in the optimal solution was no more than $v_j / (1+\epsilon)^{k-1}$. During the enhancement of the initial partition, we may have inserted delays with respect to each of the buckets $I_1, \ldots, I_k$. The first kind of delays relates to the approximate guesses used to create the single-item buckets. In this step, we may have additively over-estimated the length of those buckets by an overall amount of at most $2 \epsilon \cdot ((1+\epsilon)^i - (1+\epsilon)^{i-1})$ for each bucket $I_i$. Furthermore, to compensate for the loss that this over-estimation incurred on the multi-item buckets, we increased the length of all those buckets by an overall amount of $2 \epsilon \cdot((1+\epsilon)^i - (1+\epsilon)^{i-1})$ for each bucket $I_i$. This implies that the endpoint of the bucket that ultimately contains the interval under consideration is at most
$$
\sum_{i=1}^k (1 + 4 \epsilon) \left((1+\epsilon)^i - (1+\epsilon)^{i-1}\right) =
(1+ 4 \epsilon) (1+\epsilon)^k \ ,
$$
where the equality is due to the telescopic sum. This implies that the contribution of that airplane in the modified solution is at least $v_j / ((1+ 4 \epsilon) (1+\epsilon)^k)$. Consequently, the contribution of that airplane decreased with respect to the optimal permutation by a multiplicative factor of at most
$$
\frac{v_j / ((1+ 4 \epsilon) (1+\epsilon)^k)}{v_j / (1+\epsilon)^{k-1}} =
(1+ 4 \epsilon)(1+\epsilon) \ .
$$

To complete the proof, we need to prove that the optimal solution for the reduced generalized assignment instance obtains at least the same value as the distance value implied by the modified solution for the airplane refueling instance. This is straight-forward since the above modified solution imply a feasible solution for the reduced instance whose value is the same. In particular, notice that we lower bound the contribution of each airplane in the modified solution with the same value term that is used in the reduction, namely, the endpoint of the residing bucket.

\subsection{Proof of Lemma~\ref{lem:items_vs_airplanes}}

For any airplane $A_j$, suppose that its corresponding item $j$ was assigned to knapsack $i$ by our generalized assignment algorithm. Furthermore, suppose there are $k$ multi-item knapsacks that were assigned an infeasibility item up to knapsack $i$, according to the order of knapsacks. Let $R =\{r_1, r_2, \ldots, r_k\}$ be their indices in increasing order, i.e., $r_1 < \cdots < r_k$.

Earlier on, we observed that every infeasibility item must have identified its assigned knapsack as big for its size class. Therefore, based on how big knapsacks are defined, we know that the size of any infeasibility item that was assigned to multi-item knapsack $M_\ell$ is no more than an $(\epsilon^2/ (1+\epsilon))$-fraction of the simple lower bound on the overall capacity of all knapsacks before $M_\ell$ (but not including it). In addition, as previously explained (see page~\pageref{items:properties_instance}, third item), our lower bound for this sum of capacities is exactly $(1+\epsilon)^\ell$. Consequently, the overall increase in the capacities of all knapsack up to and including knapsack $i$ is
$$
\sum_{r \in R} \epsilon^2 (1+\epsilon)^{r-1} \leq \epsilon^2 \cdot \sum_{t = 0}^{k - 1} \frac{(1+\epsilon)^{r_k - 1}}{(1+\epsilon)^t} = \epsilon^2 (1+\epsilon)^{r_k - 1} \frac{(1+\epsilon)^{k} - 1}{\epsilon \cdot (1+\epsilon)^{k-1}} \leq \epsilon \cdot (1+\epsilon)^{r_k} \ ,
$$
where the first inequality results by observing that a maximal increase in the capacities happens when the indices are as large as possible, and the equality follows by a simple geometric summation. Clearly, knapsack $i$ does not appear before knapsack $r_k$ in the order, and thus, we conclude that the overall capacity of all knapsacks that preceded it was $C_k \geq (1+\epsilon)^{r_k}$. With the inclusion of infeasibility items, this quantity is still at most $C_k + \epsilon \cdot (1+\epsilon)^{r_k}$. Accordingly, the contribution of item $j$ was at most $v_j / C_k$ with respect to the generalized assignment instance, while its corresponding airplane $A_j$ guarantees a distance contribution of at least $v_j / (C_k + \epsilon \cdot (1+\epsilon)^{r_k})$. Consequently, the value loss in our translation between these instances is lower-bounded by
$$
\frac{v_j / (C_k + \epsilon \cdot (1+\epsilon)^{r_k})}{v_j / C_k} = 1 - \frac{\epsilon \cdot (1+\epsilon)^{r_k}}{C_k+\epsilon \cdot (1+\epsilon)^{r_k}} \geq  \frac{1}{1+\epsilon} \ ,
$$
where the inequality uses the fact that $C_k \geq (1+\epsilon)^{r_k}$.

% PTAS %%%%%%%%%%%%%%%%%%%%%%%%%%%%%%%%%%%%%%%%%%%%%%%%%%%%%%%%%%%%%%%
\section{A Polynomial-Time Approximation Scheme} \label{app:ptas}

{\bf General intent.} Up until now, we have shown how to approximate the airplane refueling problem within factor $1-\epsilon$ of optimal. However, our algorithm incurs a running time of $\poly( |{\cal I}| ) \cdot O( C^{ \poly(1/\epsilon) } )$, which is only pseudo-polynomial for any fixed $\epsilon$, since $C = \sum_{j=1}^n c_{j}$ is not necessarily polynomial in the input size $|\cal I|$. We next show how to improve this running time to $\poly( |{\cal I}| ) \cdot O( n^{ \poly(1/\epsilon) } )$, attaining a true polynomial-time approximation scheme. The general idea behind this improvement is to partition the collection of airplanes into groups such that:
\begin{enumerate}
\item The ratio between $c_{\max}$ and $c_{\min}$ within each group is $O(n^{ \poly(1/\epsilon) } )$. Consequently, after normalizing all consumption rates by $c_{\min}$, we would have $C = O(n^{ \poly(1/\epsilon) } )$, meaning that our pseudo-PTAS can be used to separately obtain a $(1-\epsilon)$-approximation for each group, in time $\poly( |{\cal I}| ) \cdot O( n^{ \poly(1/\epsilon) } )$.

\item The collection of approximate solutions to the different groups can be glued together, forming a near-optimal solution for the original instance.
\end{enumerate}

\noindent {\bf The classify-and-delete approach.} We begin by partitioning the set of airplanes to fuel consumption rate classes $S_1, S_2, \ldots$  by powers of $n/\epsilon$. Specifically, recall that $c_{\min} = 1$, and therefore, the first class consists of all airplanes whose consumption rate is in $[1,n/\epsilon)$, the second class corresponds to the consumption rates in $[n/\epsilon,(n/\epsilon)^2)$, so forth and so on. We pick a random uniformly-distributed number $r$ from the set $\{0, 1, \ldots, 1 / \epsilon - 1\}$, recalling that $1 / \epsilon$ was assumed to take integer values. Given $r$, we modify the original instance by eliminating all airplanes in each class $S_\ell$ for which $\ell \equiv r \ (\text{mod}\ 1 / \epsilon)$. One can easily verify that the optimal solution for this modified instance has an expected distance value of at least $(1 - \epsilon) \cdot \opt$, where $\opt$ is the optimal distance value for the original instance. This observation follows by noticing that: (1) each airplane is removed with probability at most $\epsilon$; and (2) when we remove a set of airplanes from the optimal permutation, the distance contribution of any remaining airplane can only increase.

We define a \textit{block} to be a maximal set of undeleted classes that have consecutive indices. For example, suppose that $r = 0$, then the first block is defined as $J_1 = S_1 \cup \cdots \cup S_{1 / \epsilon - 1}$, the second block is $J_2 = S_{1 / \epsilon + 1} \cup \cdots \cup S_{2 / \epsilon - 1}$, and so on. Note that all blocks consist of $1 / \epsilon - 1$ consecutive classes except for possibly the first and last blocks that may consist of  fewer classes. This implies that within each block, the ratio between the largest and smallest consumption rates is at most $(n / \epsilon)^{1 / \epsilon}$, that is, polynomial in $n$ for any fixed $\epsilon$. Therefore, we can utilize our pseudo-PTAS to approximate the instance induced by each block of airplanes to within factor $1 - \epsilon$, while incurring a running time of $\poly( |{\cal I}| ) \cdot O( n^{ \poly(1/\epsilon) } )$.

\smallskip \noindent {\bf Merging the permutation.} The remainder of this section is dedicated to showing that we can separately solve each block, glue the resulting permutations according to the blocks order, so that a near-optimal permutation is obtained for the original instance.

To this end, consider any permutation $\pi$ of the undeleted airplanes, and suppose we incrementally modify this permutation such that all airplanes of block $J_1$ appear first (according to their order in $\pi$), then those of block $J_2$ (again, according to $\pi$), and so on. Specifically, our first modification step takes all airplanes in $J_1$, moves them to the beginning of $\pi$ (keeping their original order), and then affixes the remainder of $\pi$ of which the airplanes in $J_1$ were removed. Note that there are at most $n$ airplanes in $J_1$, where the consumption rate of each airplane is at most $\epsilon/n$ times that of any airplane in $J_2$. The latter claim follows by observing that the class between $J_1$ and $J_2$ was eliminated during the classify-and-delete procedure. As a result, the overall consumption rate of all airplanes in $J_1$ is at most $\epsilon$ times the consumption rate of any airplane of $J_2$. Hence, the distance contribution of any airplane in $J_2$ decreases by a factor of at most $1/(1+\epsilon)$ due to this modification step, while that of airplanes in $J_1$ can only improve. We continue and reuse this argument with respect to the remaining blocks. In particular, one can easily prove that $\sum_{j \in J^k} c_j \leq \epsilon c_{\min}(J_{k+1})$, where $J^k = J_1 \cup \cdots \cup J_k$ and $c_{\min}(J) = \min_{j \in J} c_j$. Once we obtain the final permutation, where airplanes appear according to their block order and according to $\pi$ within each block, the resulting distance value is at least $1/(1+\epsilon)$ times that of the original permutation $\pi$.

One can also utilize the above argument in the opposite direction to prove that if we have a separate permutation for each block (without any additional airplanes in other blocks) then the concatenation of these permutations according to the blocks order $J_1, J_2, \ldots$ implies nearly the same distance. This claim is implicit in the proof above, showing that if one takes any permutation of the airplanes in $J_{k+1}$ and places all airplanes in $\cup_{i=1}^k J_i$ before it, the distance contribution of each airplane in $J_{k+1}$ decreases by at most $1/ (1+\epsilon)$.

\smallskip \noindent {\bf The resulting approximation guarantees.} Let $\opt_i$ and $\alg_i$ be the distance values of the  optimal permutation and the one generated by our pseudo-PTAS for block $J_i$, respectively. Also, let $\alg$ be the permutation obtained by concatenating the separate permutations we generated for each of the blocks, according to their order. Notice that
$$
(1 - \epsilon) \cdot \opt \leq \sum_i \opt_i \leq \frac{\sum_i \alg_i}{1 - \epsilon} \leq \frac{(1+\epsilon) \cdot \alg}{1-\epsilon} \ .
$$
The first inequality follows since the sum of distances implied by the restriction of the optimal permutation for the modified instance (with deleted classes) to every block $J_i$ is: (1) at least as good as the distance of the optimal permutation, and (2) cannot be better than the sum of the optimal distances for each such block. The second inequality results from Theorem~\ref{thm:pseudoapprox}. The last inequality follows from the preceding discussion regarding the effects of modification steps.

Consequently, the expected distance value of our solution is at least $(1-\epsilon)^2/(1+\epsilon) \geq (1-\epsilon)^3$ times the distance value of the optimal solution. It is worth mentioning that the only random ingredient in our algorithm is related to the classify-and-delete procedure, and more specifically, to the way we pick the random number $r$. Hence, for the purpose of derandomization, it is sufficient to consider all $1 / \epsilon$ possible values for $r$, run the algorithm for each choice, and return the best solution found. With these observations in place, Theorem~\ref{thm:ptas} now follows.
\end{document}